\providecommand{\U}[1]{\protect\rule{.1in}{.1in}}
\newtheorem{theorem}{Theorem}
\newtheorem{remark}{Remark}
\begin{document}

\title{{\LARGE \textbf{Decentralized 2-D Control of Vehicular Platoons\\under Limited
Visual Feedback}}}
\author{Christos K. Verginis, Charalampos P. Bechlioulis, Dimos V. Dimarogonas and Kostas J.
Kyriakopoulos \thanks{C. K. Verginis, C. P. Bechlioulis and K. J. Kyriakopoulos
are with the Control Systems Laboratory, School of Mechanical Engineering,
National Technical University of Athens, Athens 15780, Greece. D. V.
Dimarogonas is with the Centre for Autonomous Systems at Kungliga Tekniska
Hogskolan, Stockholm 10044, Sweden. Emails:
{{ chrisverginis@gmail.com, chmpechl@mail.ntua.gr, dimos@kth.se,
kkyria@mail.ntua.gr}}.} \thanks{This work was supported by the EU funded
project RECONFIG: Cognitive, Decentralized Coordination of Heterogeneous
Multi-Robot Systems via Reconfigurable Task Planning, FP7-ICT-600825,
2013-2016 and the Swedish Research Council (VR).}}%

\maketitle
%

\begin{abstract}%

In this paper, we consider the two dimensional (2-D) predecessor-following
control problem for a platoon of unicycle vehicles moving on a planar surface. More
specifically, we design a decentralized kinematic control protocol, in the
sense that each vehicle calculates its own control signal based solely on local
information regarding its preceding vehicle, by its on-board camera, without
incorporating any velocity measurements. Additionally, the transient and
steady state response is a priori determined by certain designer-specified
performance functions and is fully decoupled by the number of vehicles
composing the platoon and the control gains selection. Moreover, collisions
between successive vehicles as well as connectivity breaks, owing to the
limited field of view of cameras, are provably avoided. Finally, an extensive
simulation study is carried out in the WEBOTS$^{\text{TM}}$ realistic simulator, clarifying the proposed control scheme and verifying its effectiveness.%

\end{abstract}%

\section{Introduction\label{sec:Introduction}}

During the last few decades, the 1-D longitudinal control problem of Automated
Highway Systems (AHS) has become an active research area in automatic control
(see \cite{1996Swaroop_Hedrick,2001Soo_Chong_Roh,2005Jovanovic_Bamieh,2009Barooah_Mehta_Hespanha,2012Lin_Fardad_Jovanovic}
and the references therein). Unlike human drivers that are not able to react quickly and accurately enough to follow each other in close proximity at high speeds, the safety and capacity of highways (measured in vehicles/lanes/time) is significantly increased when vehicles operate autonomously, forming large platoons at close spacing. However, realistic situations necessitate for 2-D motion on planar surfaces (see Fig. \ref{Fig:platoons}).

Early works in \cite{1994Hedrick_Tomizuka_Varaiya,1997Li_Roberto_et_al,2000_Rajamani,1998Tan_Rajesh_Zhang}
consider the lane-keeping and lane-changing control for platoons in AHS,
adopting however a centralized network, where all vehicles exchange
information with a central computer that determines the control protocol,
making thus the overall system sensitive to delays, especially when a large
number of vehicles is involved. Alternatively, rigid multi-agent formations
are employed in decentralized control schemes, where each vehicle utilizes
relative information from its neighbors. The majority of these works consider
unicycle
\cite{2004_Mazo_Speranazon,2005_Mariottini_Pappas,2005_Gustavi,2002_Das_Fierro,2008_Gustavi_Hu}
and bicycle kinematic models
\cite{2006_Khatir_Davison,2004_Pham_Wang,2013Ali_Garcia_Martinet}. However,
many of them adopt linearization techniques
\cite{2005_Mariottini_Pappas,2004_Tanner_Pappas,2003_Lawton_Beard,2002_Das_Fierro,2014_Maithripala_Berg,2006_Khatir_Davison,2013Ali_Garcia_Martinet,1994Godbole_Lygeros}
that may lead to unstable inner dynamics or degenerate configurations owing to
the non-holonomic constraints of the vehicles, as shown in \cite{2003_Vidal}.
Additionally, each vehicle is assumed to have access to the neighboring
vehicles' velocity, either explicitly, hence degenerating the decentralized
form of the system and imposing inherent communication delays, or by
employing observers \cite{2008_Gustavi_Hu} that increase the overall design
complexity. Furthermore, the transient and steady state response of the closed
loop is affected severely by the control gains selection
\cite{2009_Hao_Barooah}, thus limiting the controller's robustness and
complicating the design procedure.

\begin{figure}[t]
\begin{center}
\includegraphics[width=3.35in]{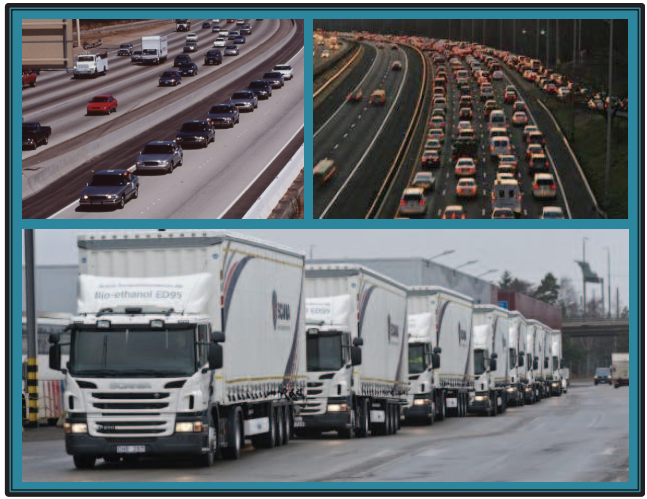}
\end{center}
\caption{Vehicular platoons in 2-D motion on planar roads.}%
\label{Fig:platoons}%
\end{figure}

Another significant issue affecting the $2$-D control of vehicular platoons
concerns the sensing capabilities when visual feedback from cameras is employed. A vast number of the related works neglects
the sensory limitations, which however are crucial in real-time scenarios. In
\cite{2002_Das_Fierro,2003_Vidal} visual feedback from omnidirectional cameras
is adopted, not accounting thus for sensor limitations, which however are
examined in \cite{2004_Mazo_Speranazon} considering directional sensors for
the tracking problem of a moving object by a group of robots. Although cameras
are directional sensors, they inherently have a limited range and a limited
angle of view as well. Hence, in such cases each agent should keep a certain
close distance and heading angle from its neighbors, in order to avoid
connectivity breaks. Thus, it is clear that limited sensory capabilities lead
to additional constraints on the behavior of the system, that should therefore
be taken into account exclusively when designing the control protocols. The aforementioned specifications were considered in \cite{2014_Panagou_Kumar}, where a solution based on set-theory and dipolar vector fields was introduced. Alternatively, a visual-servoing scheme for leader-follower formation was presented in \cite{2003_Cowan_et_al}. Finally, a centralized control protocol under vision-based localization for leader-follower formations was adopted in \cite{2007_Mariottini_et_al, 2009_Mariottini_et_al}.

In this paper, we extend our previous work on 1-D longitudinal control of
vehicular platoons \cite{2014Bechlioulis_Dimarogonas_Kyriakopoulos} to $2$-D motion on planar surfaces, under the predecessor-following
architecture. We design a fully decentralized kinematic control protocol, in
the sense that each vehicle has access only to the relative distance and
heading error with respect to its preceding vehicle. Such information is
obtained by an onboard camera with limited field of view \cite{2005_Gustavi}, that imposes inevitably certain constraints on the
configuration of the platoon. More specifically, each vehicle aims at
achieving a desired distance from its predecessor, while keeping it within the
field of view of its onboard camera in order to maintain visual connectivity
and avoid collisions. Moreover, the transient and steady state response is
fully decoupled by the number of vehicles and the control gains selection.
Finally, the explicit collision avoidance and connectivity maintenance
properties are imposed by certain designer-specified performance functions,
that incorporate the aforementioned visual constraints. In summary, the main
contributions of this work are given as follows:

\begin{itemize}
\item We propose a novel solution to the $2$-D formation control problem of vehicular
platoons, avoiding collisions and connectivity breaks owing to visual feedback constraints.

\item We develop a fully decentralized kinematic control protocol, in the
sense that the feedback of each vehicle is based exclusively on its own
camera, without incorporating any measurement of the velocity of the preceding vehicle.

\item The transient and steady state response of the closed loop system is
explicitly determined by certain designer-specified performance functions,
simplifying thus the control gains selection.
\end{itemize}

The manuscript is organized as follows. The problem statement is given in
Section \ref{PROBLEM STATEMENT}. The decentralized control protocol is
provided in Section \ref{Control Design}. In Section
\ref{sec:simulation_results}, an extensive simulation study is presented,
clarifying and verifying the theoretical findings. Finally, we conclude in
Section \ref{sec:conclusions}.

\section{{Problem Statement\label{PROBLEM STATEMENT}}}

Consider a platoon of $N$ vehicles moving on a planar surface under unicycle
kinematics:%
\begin{equation}
\left.
\begin{array}
[c]{l}%
\dot{x}_{i}=v_{i}\cos\varphi_{i}\\
\dot{y}_{i}=v_{i}\sin\varphi_{i}\\
\dot{\varphi}_{i}=\omega_{i}%
\end{array}
\right\}  \text{, }i=1,\dots,N\label{eqn:Dynamics}%
\end{equation}
where $x_{i}$, $y_{i}$, $\varphi_{i}$ denote the position and orientation of
each vehicle on the plane and $v_{i},\omega_{i}$ are the linear and angular
velocities respectively. Let us also denote by $d_{i}(t)$ and $\beta_{i}(t)$
the distance and the bearing angle between successive vehicles $i$ and $i-1$
(see Fig. \ref{Fig:Formation}). Furthermore, we assume that the only available
feedback concerns the distance $d_{i}(t)$ and the bearing angle $\beta_{i}(t)$,
which both emanate from an onboard camera that detects a specific marker on
the preceding vehicle (e.g., the number plate). The control objective is to
design a distributed control protocol based exclusively on visual feedback
such that $d_{i}(t)\rightarrow d_{i,des}$ and $\beta_{i}(t)\rightarrow0$,
i.e., each vehicle tracks its predecessor and maintains a prespecified desired
distance $d_{i,des}$. Additionally, $d_{i}(t)$ should be kept greater than
$d_{\mathsf{col}}$ to avoid collisions between successive vehicles. In the
same vein, the inter-vehicular distance $d_{i}(t)$\ and the bearing angle
$\beta_{i}(t)$ should be kept less than $d_{\mathsf{con}}>d_{\mathsf{col}}$ and $\beta_{\mathsf{con}}$ respectively, in
order to maintain the connectivity owing to the camera's limited field
of view (see Fig. \ref{Fig:Formation}). Moreover, the desired trajectory of
the formation is generated by a reference/leading unicycle vehicle:%
\[
\begin{array}
[c]{l}%
\dot{x}_{0}=v_{0}\cos\varphi_{0}\\
\dot{y}_{0}=v_{0}\sin\varphi_{0}\\
\dot{\varphi}_{0}=\omega_{0}%
\end{array}
\]
with bounded velocities $v_{0}(t)$, $\omega_{0}(t)$ and is only provided to
the first vehicle. Finally, to solve the aforementioned control problem, we
assume that initially each vehicle lies within the field of view of its
follower's camera and no collision occurs, which are formulated as follows.

\begin{figure}[ptb]
\begin{center}
\includegraphics[width=3.35in]{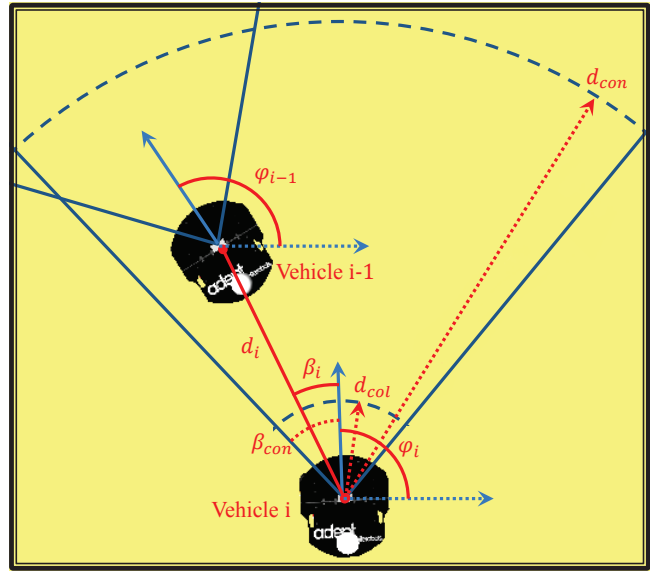}
\end{center}
\caption{Graphical illustration of two successive vehicles in the platoon.
Each vehicle should keep its distance $d_{i}\left(  t\right)$ and bearing angle $\beta_{i}\left(  t\right)$ to its predecessor within the feasible area
$d_{\mathsf{col}}<d_{i}\left(  t\right)<d_{\mathsf{con}}$ and $\left|\beta_{i}\left(t\right)\right|<\beta_{\mathsf{con}}$, thus avoiding collisions and connectivity breaks.}%
\label{Fig:Formation}%
\end{figure}

\textbf{Assumption A1.} The initial state of the platoon does not violate the
collision and connectivity constraints, i.e., $d_{\mathsf{col}}<d_{i}%
(0)<d_{\mathsf{con}\text{ }}$ and $\left\vert \beta_{i}(0)\right\vert
<\beta_{\mathsf{con}}$, $i=1,\dots,N$.

In the sequel, we define the distance and heading errors:%
\begin{equation}
\left.
\begin{array}
[c]{l}%
e_{d_{i}}(t)=d_{i}(t)-d_{i,des}\\
e_{\beta_{i}}(t)=\beta_{i}(t)
\end{array}
\right\}  \text{, }i=1,\dots,N \label{eqn:errors}%
\end{equation}
where $d_{i}\left(  t\right)  =\sqrt{\left(  x_{i}\left(  t\right)
-x_{i-1}\left(  t\right)  \right)  {}^{2}+\left(  y_{i}\left(  t\right)
-y_{i-1}\left(  t\right)  \right)  {}^{2}}$. Hence, differentiating
(\ref{eqn:errors}) with respect to time and substituting (\ref{eqn:Dynamics}),
we obtain:%
\begin{equation}
\left.
\begin{array}
[c]{l}%
\dot{e}_{d_{i}}=-v_{i}\cos\beta_{i}+v_{i-1}\cos(\gamma_{i}+\beta_{i})\\
\dot{e}_{\beta_{i}}=-\omega_{i}+\frac{v_{i}}{d_{i}}\sin\beta_{i}-\frac
{v_{i-1}}{d_{i}}\sin(\gamma_{i}+\beta_{i})
\end{array}
\right\}  \text{, }i=1,\dots,N \label{eqn:error_dynamics}%
\end{equation}
where $\gamma_{i}(t)=\varphi_{i}(t)-\varphi_{i-1}(t)$, which may be expressed
in vector form as follows:%
\begin{equation}%
\begin{array}
[c]{l}%
\dot{e}_{d}=-\tilde{C}v+c\\
\dot{e}_{\beta}=-\omega+D^{-1}(\tilde{S}v+s)
\end{array}
\label{eqn:platoon_error_dynamics}%
\end{equation}
where%
\begin{align*}
e_{d}  &  =\left[  e_{d_{1}},\dots,e_{d_{N}}\right]  ^{T}\text{, }e_{\beta
}=\left[  e_{\beta_{1}},\dots,e_{\beta_{N}}\right]  ^{T}\\
v  &  =[v_{1},\dots,v_{N}]^{T}\text{, }\omega=[\omega_{1},\dots,\omega
_{N}]^{T}\\
D  &  =\mathrm{diag}(d_{1},\dots,d_{N})^{T}\\
c  &  =[v_{0}\cos(\gamma_{1}+\beta_{1}),0,\dots,0]^{T}\\
s  &  =[v_{0}\sin(\gamma_{1}+\beta_{1}),0,\dots,0]^{T}%
\end{align*}
and $\tilde{C}$, $\tilde{S}$ are the lower bi-diagonal matrices:%
\[%
\begin{array}
[c]{l}%
{\tilde{C}}{=}\left[
\begin{array}
[c]{cccc}%
\cos\beta_{1} & 0 & \cdots & 0\\
-\cos(\beta_{2}+\gamma_{2}) & \cos\beta_{2} &  & \vdots\\
0 & \ddots & \ddots & \\
&  & \ddots & 0\\
0 & \cdots & -\cos(\beta_{N}+\gamma_{N}) & \cos\beta_{N}%
\end{array}
\right] \\
\\
{\tilde{S}}{=}\left[
\begin{array}
[c]{cccc}%
\sin\beta_{1} & 0 & \cdots & 0\\
-\sin(\beta_{2}+\gamma_{2}) & \sin\beta_{2} &  & \vdots\\
0 & \ddots & \ddots & \\
&  & \ddots & 0\\
0 & \cdots & -\sin(\beta_{N}+\gamma_{N}) & \sin\beta_{N}%
\end{array}
\right].
\end{array}
\]

\section{{Control Design\label{Control Design}}}

The concepts and techniques in the scope of prescribed performance
control, recently proposed in \cite{2014Bechlioulis_Rovithakis},
are adapted in this work in order to: i)
achieve predefined transient and steady state response for the distance and
heading errors $e_{d_{i}}(t)$, $e_{\beta_{i}}(t)$, $i=1,\ldots,N$ as well as
ii) avoid the violation of the collision and connectivity constraints
presented in Section \ref{PROBLEM STATEMENT}. As stated in \cite{2014Bechlioulis_Rovithakis}, prescribed performance characterizes the behavior where the
aforementioned errors evolve strictly within a predefined region that is
bounded by absolutely decaying functions of time, called performance
functions. The mathematical expressions of prescribed performance is given by
the following inequalities:%
\begin{equation}
\left.
\begin{array}
[c]{l}%
-\underline{M}_{d_{i}}\rho_{d_{i}}\left(  t\right)  <e_{d_{i}}\left(
t\right)  <\overline{M}_{d_{i}}\rho_{d_{i}}\left(  t\right) \\
-\underline{M}_{\beta_{i}}\rho_{\beta_{i}}\left(  t\right)  <e_{\beta_{i}%
}\left(  t\right)  <\overline{M}_{\beta_{i}}\rho_{\beta_{i}}\left(  t\right)
\end{array}
\right\}  \text{, }{i=1,\ldots,N} \label{eqn:PP_inequalities}%
\end{equation}
{for all }$t\geq0${, where}%
\begin{equation}%
\begin{array}
[c]{l}%
\rho_{d_{i}}(t)=(1-\tfrac{\rho_{d,\infty}}{\max\left\{  \underline{M}_{d_{i}%
},\overline{M}_{d_{i}}\right\}  })e^{-l_{d}t}+\tfrac{\rho_{d,\infty}}%
{\max\left\{  \underline{M}_{d_{i}},\overline{M}_{d_{i}}\right\}  }\\
\rho_{\beta_{i}}(t)=(1-\tfrac{\rho_{\beta,\infty}}{\max\left\{  \underline
{M}_{\beta_{i}},\overline{M}_{\beta_{i}}\right\}  })e^{-l_{\beta}t}%
+\tfrac{\rho_{\beta,\infty}}{\max\left\{  \underline{M}_{\beta_{i}}%
,\overline{M}_{\beta_{i}}\right\}  }%
\end{array}
\label{eqn:PP_rho}%
\end{equation}
{are designer-specified, smooth, bounded and decreasing positive functions of
time with positive parameters $l_{j}$, $\rho_{j,\infty}$, $j\in\left\{  d,\beta\right\}$ incorporating the desired transient and steady state
performance respectively, and $\underline{M}_{j_{i}}$, $\overline{M}_{j_{i}}$,
$j\in\left\{  d,\beta\right\}  $, $i=1,\dots,N$ are positive parameters
selected appropriately to satisfy the collision and connectivity constraints,
as presented in the sequel. In particular, the decreasing rate of $\rho
_{j_{i}}\left(  t\right)$, $j\in\left\{  d,\beta\right\}$, $i=1,\dots,N$, which is affected by the constant $l_{j}$, $j\in\left\{  d,\beta\right\}$
introduces a lower bound on the speed of convergence of $e_{j_{i}}\left(
t\right)  $, $j\in\left\{  d,\beta\right\}$, $i=1,\dots,N$. Furthermore, the
constants $\rho_{j,\infty}$, $j\in\left\{  d,\beta\right\}$ can be set arbitrarily small (i.e., $\rho
_{j,\infty}\ll\max\left\{  \underline{M}_{j_{i}},\overline{M}_{j_{i}}\right\}$, $j\in\left\{  d,\beta\right\}$, $i=1,\dots,N$), thus achieving practical convergence of the distance and heading errors to zero. Additionally,
we select: }

{
\begin{equation}
\left.
\begin{array}
[c]{l}%
\underline{M}_{d_{i}}=d_{i,des}-d_{\mathsf{col}}\\
\overline{M}_{d_{i}}=d_{\mathsf{con}}-d_{i,des}\\
\underline{M}_{\beta_{i}}=\overline{M}_{\beta_{i}}=\beta_{\mathsf{con}}%
\end{array}
\right\}  \text{, }i=1,\dots,N\text{.} \label{eqn:Md_Mbeta}%
\end{equation}
Notice that the parameters $d_{\mathsf{con}}$, $\beta
_{\mathsf{con}}$ are related to the constraints imposed by the camera's
limited field of view. More specifically, $d_{\mathsf{con}}$ should be
assigned a value less or equal to the distance from which the marker on the preceding vehicle may be
detected by the follower's camera, whereas $\beta_{\mathsf{con}}$ should be less or equal
to the half of the camera's angle of view, from which it follows that
$\beta_{\mathsf{con}}<\frac{\pi}{2}$ for common cameras. Apparently, since the
desired formation is compatible with the collision and connectivity
constraints (i.e., $d_{\mathsf{col}}<d_{i,des}<d_{\mathsf{con}}$,
$i=1,\dots,N$), the aforementioned selection ensures that $\underline
{M}_{j_{i}}$, $\overline{M}_{j_{i}}$ $>0$, $j\in\left\{  d,\beta\right\}  $,
$i=1,\dots,N$ and consequently under \textbf{Assumption A1} that:}%
\begin{equation}
\left.
\begin{array}
[c]{l}%
-\underline{M}_{d_{i}}\rho_{d_{i}}\left(  0\right)  <e_{d_{i}}\left(
0\right)  <\overline{M}_{d_{i}}\rho_{d_{i}}\left(  0\right) \\
-\underline{M}_{\beta_{i}}\rho_{\beta_{i}}\left(  0\right)  <e_{\beta_{i}%
}\left(  0\right)  <\overline{M}_{\beta_{i}}\rho_{\beta_{i}}\left(  0\right)
\end{array}
\right\}  {,~i=1,\ldots,N}\text{.} \label{eqn:Con. initial}%
\end{equation}
{Hence, guaranteeing prescribed performance via (\ref{eqn:PP_inequalities})
for all }$t>0${ and employing the decreasing property of $\rho_{j_{i}}(t)$,
$j\in\left\{  d,\beta\right\}  $,~$i=1,\ldots,N$, we conclude:}%
\[
\left.
\begin{array}
[c]{l}%
-\underline{M}_{d_{i}}<e_{d_{i}}\left(  t\right)  <\overline{M}_{d_{i}}\\
-\underline{M}_{\beta_{i}}<e_{\beta_{i}}\left(  t\right)  <\overline{M}%
_{\beta_{i}}%
\end{array}
\right\}  {,~i=1,\ldots,N}%
\]
{and consequently, owing to (\ref{eqn:Md_Mbeta}):}%
\[
\left.
\begin{array}
[c]{r}%
d_{\mathsf{col}}<d_{i}(t)<d_{\mathsf{con}}\\
-\beta_{\mathsf{con}}<\beta_{i}(t)<\beta_{\mathsf{con}}%
\end{array}
\right\}  {,~i=1,\ldots,N}%
\]
{for all }$t\geq0$, {which ensures the satisfaction of the collision and
connectivity constraints. }

\subsection{{Decentralized Control Protocol}}

{In the sequel, we propose a decentralized control protocol that guarantees
(\ref{eqn:PP_inequalities}) for all $t\geq0$, thus leading to the solution of
the 2-D formation control problem with prescribed performance under collision
and connectivity constraints for the considered platoon of vehicles. Hence,
given the distance and heading errors $e_{j_{i}}\left(  t\right)  $,
$j\in\left\{  d,\beta\right\}  $, $i=1,\dots,N$ defined in (\ref{eqn:errors}): }

{\textbf{Step I.} \textit{Select the corresponding performance functions }$\rho_{j_{i}%
}\left(  t\right)  $\textit{\ and positive parameters }$\underline{M}_{j_{i}%
},\overline{M}_{j_{i}}$\textit{, }$j\in\left\{  d,\beta\right\}  $%
,\textit{\ }$i=1,\dots,N$\textit{\ following (\ref{eqn:PP_rho}) and
(\ref{eqn:Md_Mbeta}) respectively, that incorporate the desired transient and
steady state performance specifications as well as the collision and
connectivity constraints.} }

{\textbf{Step II.} \textit{Define the normalized errors as:}}%
\begin{align}
\xi_{d}\left(  e_{d},t\right)   &  =\left[
\begin{array}
[c]{c}%
\xi_{d_{1}}\left(  e_{d_{1}},t\right) \\
\vdots\\
\xi_{d_{N}}\left(  e_{d_{N}},t\right)
\end{array}
\right]  :=\left[
\begin{array}
[c]{c}%
\tfrac{e_{d_{1}}}{\rho_{d_{1}}\left(  t\right)  }\\
\vdots\\
\tfrac{e_{d_{N}}}{\rho_{d_{N}}\left(  t\right)  }%
\end{array}
\right]  \triangleq\left(  \rho_{d}\left(  t\right)  \right)  ^{-1}%
e_{d}\label{eqn:ksid}\\
\xi_{\beta}\left(  e_{\beta},t\right)   &  =\left[
\begin{array}
[c]{c}%
\xi_{\beta_{1}}\left(  e_{\beta_{1}},t\right) \\
\vdots\\
\xi_{\beta_{N}}\left(  e_{\beta_{N}},t\right)
\end{array}
\right]  :=\left[
\begin{array}
[c]{c}%
\tfrac{e_{\beta_{1}}}{\rho_{\beta_{1}}\left(  t\right)  }\\
\vdots\\
\tfrac{e_{\beta_{N}}}{\rho_{\beta_{N}}\left(  t\right)  }%
\end{array}
\right]  \triangleq\left(  \rho_{\beta}\left(  t\right)  \right)
^{-1}e_{\beta} \label{eqn:ksib}%
\end{align}
{\textit{where }$\rho_{j}\left(  t\right)  =\mathrm{diag}\left(  \left[
\rho_{j_{i}}\left(  t\right)  \right]  _{i=1,\dots,N}\right)  $,
\textit{\ }$j\in\left\{  d,\beta\right\}  $, \textit{and design the
decentralized control protocol as:}%
\begin{equation}
v\left(  \xi_{d},t\right)  =\left[
\begin{array}
[c]{c}%
v_{1}\left(  \xi_{d_{1}},t\right) \\
\vdots\\
v_{_{N}}\left(  \xi_{d_{N}},t\right)
\end{array}
\right]  =K_{d}\varepsilon_{d}\left(  \xi_{d}\right)  \label{eqn:v_des}%
\end{equation}%
\begin{equation}
\omega\left(  \xi_{\beta},t\right)  =\left[
\begin{array}
[c]{c}%
\omega_{1}\left(  \xi_{\beta_{1}},t\right) \\
\vdots\\
\omega_{N}\left(  \xi_{\beta_{N}},t\right)
\end{array}
\right]  =K_{\beta}\left(  \rho_{\beta}\left(  t\right)  \right)
^{-1}r_{\beta}\left(  \xi_{\beta}\right)  \varepsilon_{\beta}\left(
\xi_{\beta}\right)  \label{eqn:wmega_des}%
\end{equation}
with $K_{j}=\mathrm{diag}(k_{j_{1},}\dots,k_{j_{N}})$, $k_{j_{i}}>0$,
$j\in\left\{  d,\beta\right\}  $, $i=1,\dots,N$, and }

{
\begin{align}
r_{\beta}\left(  \xi_{\beta}\right)   &  =\mathrm{diag}\left(  \left[
\tfrac{\frac{1}{\underline{M}_{\beta_{i}}}+\frac{1}{\overline{M}_{\beta_{i}}}%
}{\left(  1+\frac{\xi_{\beta_{i}}}{\underline{M}_{\beta_{i}}}\right)  \left(
1-\frac{\xi_{\beta_{i}}}{\overline{M}_{\beta_{i}}}\right)  }\right]
_{i=1,\dots,N}\right) \label{eqn: r_beta}\\
\varepsilon_{d}\left(  \xi_{d}\right)   &  =\left[  \ln\left(  \tfrac
{1+\frac{\xi_{d_{1}}}{\underline{M}_{d_{1}}}}{1-\frac{\xi_{d_{1}}}%
{\overline{M}_{d_{1}}}}\right)  ,\dots,\ln\left(  \tfrac{1+\frac{\xi_{d_{N}}%
}{\underline{M}_{d_{N}}}}{1-\frac{\xi_{d_{N}}}{\overline{M}_{d_{N}}}}\right)
\right]  ^{T}\label{eqn:epsilon_d}\\
\varepsilon_{\beta}\left(  \xi_{\beta}\right)   &  =\left[  \ln\left(
\tfrac{1+\frac{\xi_{\beta_{1}}}{\underline{M}_{\beta_{1}}}}{1-\frac{\xi
_{\beta_{1}}}{\overline{M}_{\beta_{1}}}}\right)  ,\dots,\ln\left(
\tfrac{1+\frac{\xi_{\beta_{N}}}{\underline{M}_{\beta_{N}}}}{1-\frac{\xi
_{\beta_{N}}}{\overline{M}_{\beta_{N}}}}\right)  \right]  ^{T}\text{.}
\label{eqn:epsilon_beta}%
\end{align}
}

\begin{remark}
{Notice from (\ref{eqn:v_des}) and (\ref{eqn:wmega_des}) that the proposed
control protocol is decentralized in the sense that each vehicle utilizes only
local relative to its preceding vehicle information, obtained by its on board
camera, to calculate its own control signal. Furthermore, the proposed
methodology results in a low complexity design. No hard
calculations (neither analytic nor numerical) are required to output the
proposed control signal, thus making its distributed implementation
straightforward. Additionally, we stress that the desired transient and steady
state performance specifications as well as the collision and connectivity
constraints are exclusively introduced via the appropriate selection of
$\rho_{j_{i}}\left(  t\right)  $ and $\underline{M}_{j_{i}},\overline
{M}_{j_{i}}$, $j\in\left\{  d,\beta\right\}  $, $i=1,\dots,N$. }
\end{remark}

\subsection{{Stability Analysis\label{Stability}}}

The main results of this work are summarized in the following theorem.

\begin{theorem}
Consider a platoon of $N$ unicycle vehicles aiming at establishing a formation
described by the desired inter-vehicular distances $d_{i,des}${, $i=1,\dots
,N$, while satisfying the collision and connectivity constraints represented
by }$d_{\mathsf{col}}$ and $d_{\mathsf{con}}$, $\beta_{\mathsf{con}}$
respectively, with { $d_{\mathsf{col}}<d_{i,des}<d_{\mathsf{con}}$,
$i=1,\dots,N$ and $\beta_{\mathsf{con}}<\frac{\pi}{2}$. Under }%
\textbf{Assumption A1}, the decentralized control protocol (\ref{eqn:ksid}%
)-(\ref{eqn:epsilon_beta}) guarantees:%
\[
\left.
\begin{array}
[c]{l}%
-\underline{M}_{d_{i}}\rho_{d_{i}}\left(  t\right)  <e_{d_{i}}\left(
t\right)  <\overline{M}_{d_{i}}\rho_{d_{i}}\left(  t\right)  \\
-\underline{M}_{\beta_{i}}\rho_{\beta_{i}}\left(  t\right)  <e_{\beta_{i}%
}\left(  t\right)  <\overline{M}_{\beta_{i}}\rho_{\beta_{i}}\left(  t\right)
\end{array}
\right\}  \text{, }{i=1,\ldots,N}%
\]
for all { $t\geq0$, as well as the boundedness of all closed loop signals.}
\end{theorem}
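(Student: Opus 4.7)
The plan is to apply the prescribed-performance template, adapted to the triangular error structure of the platoon. I first recast the problem in terms of the normalized errors (\ref{eqn:ksid})--(\ref{eqn:ksib}) and work on the open set
\[
\Omega_\xi := \prod_{i=1}^N(-\underline{M}_{d_i},\overline{M}_{d_i})\times\prod_{i=1}^N(-\underline{M}_{\beta_i},\overline{M}_{\beta_i}),
\]
on which the log-barrier maps $\varepsilon_d,\varepsilon_\beta$ and hence the feedback (\ref{eqn:v_des})--(\ref{eqn:wmega_des}) are smooth. Since $\rho_{j_i}(0)=1$, Assumption~A1 together with (\ref{eqn:Md_Mbeta}) places $\xi(0)$ inside $\Omega_\xi$. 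The closed-loop $\xi$-dynamics, obtained by differentiating (\ref{eqn:ksid})--(\ref{eqn:ksib}), substituting (\ref{eqn:platoon_error_dynamics}) and inserting the control, have a $C^1$ right-hand side on $\Omega_\xi\times[0,\infty)$; in particular $D^{-1}$ is well defined because $d_i=d_{i,des}+\rho_{d_i}(t)\xi_{d_i}>d_{\mathsf{col}}>0$ on $\Omega_\xi$. Standard ODE theory then yields a unique maximal solution $\xi:[0,\tau_{\max})\to\Omega_\xi$ with $\tau_{\max}>0$.

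The core task is to prove that $\xi(t)$ remains in a compact subset of $\Omega_\xi$ on $[0,\tau_{\max})$, which I carry out by induction on the vehicle index, exploiting the fact that $\tilde C,\tilde S$ are lower bi-diagonal. For the base case $i=1$ I take $V_{d_1}=\tfrac12\varepsilon_{d_1}^2$ and use the chain rule $\dot\varepsilon_{d_1}=r_{d_1}(\xi_{d_1})\,\rho_{d_1}^{-1}\dot e_{d_1}$, where $r_{d_1}:=d\varepsilon_{d_1}/d\xi_{d_1}>0$ has the same functional form as (\ref{eqn: r_beta}) with $d$-subscripts. Substituting $v_1=k_{d_1}\varepsilon_{d_1}$ produces a dominant damping term proportional to $-k_{d_1}\cos\beta_1\,r_{d_1}\rho_{d_1}^{-1}\varepsilon_{d_1}^2$ plus a residual driven by $v_0\cos(\gamma_1+\beta_1)-\dot\rho_{d_1}\xi_{d_1}$. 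Since $|\beta_1|<\beta_{\mathsf{con}}<\pi/2$ on $\Omega_\xi$ we have $\cos\beta_1>\cos\beta_{\mathsf{con}}>0$, and the residual is uniformly bounded on $\Omega_\xi$ because $v_0,\dot\rho_{d_1}$ are bounded and $\xi_{d_1}$ is confined to a bounded interval; a standard comparison argument therefore yields $\sup_{[0,\tau_{\max})}|\varepsilon_{d_1}|<\infty$, whence $v_1$ is uniformly bounded. An analogous computation with $V_{\beta_1}=\tfrac12\varepsilon_{\beta_1}^2$, in which the factor $r_{\beta_1}\rho_{\beta_1}^{-1}$ already embedded in $\omega_1$ multiplies the same factor arising from the chain rule to produce damping $-k_{\beta_1}(r_{\beta_1}\rho_{\beta_1}^{-1})^2\varepsilon_{\beta_1}^2$, bounds $\varepsilon_{\beta_1}$ and hence $\omega_1$. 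The inductive step for $i\geq 2$ is identical, with the now-bounded $v_{i-1},\omega_{i-1}$ playing the role of the exogenous disturbance in the $V_{d_i},V_{\beta_i}$ analyses.

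Uniform boundedness of every $\varepsilon_{j_i}$ on $[0,\tau_{\max})$ traps $\xi_{j_i}(t)$ in a compact subset $\mathcal K\subset\Omega_\xi$ strictly interior to its feasible interval, because the log-barriers $\varepsilon_{j_i}(\xi)$ blow up at either endpoint. The standard ODE extension theorem — if $\tau_{\max}<\infty$ the solution must leave every compact subset of $\Omega_\xi$ — then forces $\tau_{\max}=\infty$, so the prescribed-performance inequalities hold for all $t\geq0$. Boundedness of the closed-loop signals follows immediately: $v,\omega$ are bounded on $\mathcal K$ because $\varepsilon$, $r_\beta$ and $\rho_\beta^{-1}$ are, and the kinematic states $x_i,y_i,\varphi_i$ inherit boundedness of their derivatives. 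The main technical obstacle is the cross-coupling $v_{i-1}\cos(\gamma_i+\beta_i)$, $v_{i-1}\sin(\gamma_i+\beta_i)$ between successive vehicles; the triangularity of $\tilde C,\tilde S$ together with the hard bound $\beta_{\mathsf{con}}<\pi/2$ is precisely what allows the inductive scheme to absorb this coupling into a bounded disturbance rather than a potentially destabilizing feedback term.
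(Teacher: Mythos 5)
Your proposal is correct and rests on the same machinery as the paper's proof: normalization onto the open set $\Omega_\xi$, a maximal-solution argument, log-barrier Lyapunov functions $V=\tfrac12\varepsilon^2$ whose damping survives because $\cos\beta_i>\cos\beta_{\mathsf{con}}>0$, and the compact-subset contradiction to force $\tau_{\max}=\infty$. The only real difference is organizational. For the distance subsystem the paper avoids an explicit induction by a minimum-index contradiction: it supposes a nonempty set $I$ of indices whose $\xi_{d_k}$ reach the boundary, picks $\bar k=\min I$ so that $v_{\bar k-1}$ is automatically bounded, and derives $\dot V_{d_{\bar k}}\to-\infty$ against $V_{d_{\bar k}}\to+\infty$; your forward induction over the vehicle index is the constructive unrolling of the same triangularity of $\tilde C$, and arguably yields explicit per-vehicle ultimate bounds more directly. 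For the bearing subsystem the paper treats all $N$ errors at once with the single function $V_\beta=\tfrac12\varepsilon_\beta^T K_\beta^{-1}\varepsilon_\beta$ (possible because $\dot e_{\beta_i}$ depends only on the already-bounded $v$, not on $\omega_{i-1}$), whereas you interleave the $\beta_i$ bound into each induction step; both work, and indeed $\omega_{i-1}$ never enters the $i$-th error dynamics, so your mention of it as part of the exogenous disturbance is superfluous though harmless. Neither route changes the conclusion or the required hypotheses.
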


\begin{proof}
Differentiating (\ref{eqn:ksid}) and (\ref{eqn:ksib}) with respect to time, we
obtain:%
\begin{align}
\dot{\xi}_{d} &  =(\rho_{d}\left(  t\right)  )^{-1}(\dot{e}_{d}-\dot{\rho}%
_{d}\left(  t\right)  \xi_{d})\label{eqn:ksi_d_dot}\\
\dot{\xi}_{\beta} &  =(\rho_{\beta}\left(  t\right)  )^{-1}(\dot{e}_{\beta
}-\dot{\rho}_{\beta}\left(  t\right)  \xi_{\beta})\label{eqn:ksi_beta_dot}%
\end{align}
Employing (\ref{eqn:platoon_error_dynamics}), (\ref{eqn:v_des}) and
(\ref{eqn:wmega_des}), we arrive at:%
\begin{align}
\dot{\xi}_{d} &  =h_{d}(t,\xi_{d})\nonumber\\
&  =(\rho_{d}\left(  t\right)  )^{-1}(-\tilde{C}K_{d}\varepsilon_{d}\left(
\xi_{d}\right)  +c-\dot{\rho}_{d}\left(  t\right)  \xi_{d}%
)\label{eqn:ksi_d_dot_2}\\
\dot{\xi}_{\beta} &  =h_{\beta}(t,\xi_{d},\xi_{\beta})\nonumber\\
&  =(\rho_{\beta}\left(  t\right)  )^{-1}(-K_{\beta}(\rho_{\beta}\left(
t\right)  )^{-1}r_{\beta}\left(  \xi_{\beta}\right)  \varepsilon_{\beta
}\left(  \xi_{\beta}\right)\nonumber\\
&  \left.  \text{ \ \ \ }+D^{-1}\tilde{S}K_{d}\varepsilon_{d}\left(  \xi_{d}\right) %
  +D^{-1}s-\dot{\rho}_{\beta}\left(  t\right)  \xi_{\beta}\right)
.\label{eqn:ksi_beta_dot_2}%
\end{align}
Thus, the closed loop dynamical system of $\xi(t)=\left[  \xi_{d}^{T}%
(t),\xi_{\beta}^{T}(t)\right]  ^{T}$ may be written in compact form as:
\begin{equation}
\dot{\xi}=h(t,\xi)=\left[
\begin{array}
[c]{c}%
h_{d}(t,\xi_{d})\\
h_{\beta}(t,\xi_{d},\xi_{\beta})
\end{array}
\right]  .\label{eqn:ksi_dot}%
\end{equation}
Let us also define the open set $\Omega_{\xi}=\Omega_{\xi_{d}}\times
\Omega_{\xi_{\beta}}$, where:%
\begin{align}
\Omega_{\xi_{d}} &  =(-\underline{M}_{d_{1}},\overline{M}_{d_{1}})\times
\dots\times(-\underline{M}_{d_{N}},\overline{M}_{d_{N}})\nonumber\\
\Omega_{\xi_{\beta}} &  =(-\underline{M}_{\beta_{1}},\overline{M}_{\beta_{1}%
})\times\dots\times(-\underline{M}_{\beta_{N}},\overline{M}_{\beta_{N}%
})\text{.}\nonumber
\end{align}
In what follows, we proceed in two phases. First, the existence of a unique
solution $\xi(t)$ of (\ref{eqn:ksi_dot}) over the set $\Omega_{\xi}$ for a
time interval $\left[  0,\tau_{\max}\right)  $ is ensured (i.e., $\xi(t)$
$\in\Omega_{\xi},\forall t\in\left[  0,\tau_{\max}\right)  $). Then, we prove
that the proposed control protocol (\ref{eqn:ksid})-(\ref{eqn:epsilon_beta}) guarantees: a) the boundedness of all closed loop signals for all $t\in\left[
0,\tau_{\max}\right)  $ as well as that b) $\xi(t)$ remains strictly within a
compact subset of $\Omega_{\xi}$, which leads by contradiction to $\tau_{\max
}=\infty$ and consequently to the completion of the proof.

\textit{Phase A.} Selecting the parameters $\underline{M}_{j_{i}},\overline
{M}_{j_{i}},${ $j\in\left\{  d,\beta\right\}  $, $i=1,\dots,N$ according to
(\ref{eqn:Md_Mbeta}), we guarantee that the set }$\Omega_{\xi}$ is nonempty
and open. Moreover, as shown in (\ref{eqn:Con. initial}) from
\textbf{Assumption A1}, we conclude that\textbf{ }$\xi(0)\in\Omega_{\xi}$.
Additionally, notice that the function $h$ is continuous in $t$ and locally
Lipschitz in $\xi$ over the set $\Omega_{\xi}$. Therefore, the hypothesis of
Theorem 54 in \cite{1998Sontag} (p.p. 476) hold and the existence of a maximal
solution $\xi(t)$ of (\ref{eqn:ksi_dot}) on a time interval $\left[
0,\tau_{\max}\right)  $ such that $\xi(t)\in\Omega_{\xi}$, $\forall
t\in\left[  0,\tau_{\max}\right)  $ is ensured.

\textit{Phase B. }We have proven in \emph{Phase A} that $\xi(t)\in\Omega_{\xi}$,
$\forall t\in\left[  0,\tau_{\max}\right)  $ and more specifically that:%
\begin{equation}
\left.
\begin{array}
[c]{c}%
\xi_{d_{i}}(t)=\frac{e_{d_{i}}(t)}{\rho_{d_{i}}\left(  t\right)  }%
\in(-\underline{M}_{d_{i}},\overline{M}_{d_{i}})\\
\xi_{\beta_{i}}(t)=\frac{e_{\beta_{i}}(t)}{\rho_{\beta_{i}}\left(  t\right)
}\in(-\underline{M}_{\beta_{i}},\overline{M}_{\beta_{i}})
\end{array}
\right\}  \text{, }i=1,\dots,N \label{eqn:satisfied_for_tmax}%
\end{equation}
for all $t\in\left[  0,\tau_{\max}\right)  $, from which we obtain that
$e_{d_{i}}(t)$ and $e_{\beta_{i}}(t)$ are absolutely bounded by $\max
\{\underline{M}_{d_{i}},\overline{M}_{d_{i}}\}$ and $\max\{\underline
{M}_{\beta_{i}},\overline{M}_{\beta_{i}}\}$ respectively for { $i=1,\dots,N$.}
Let us also define:%
\begin{equation}
r_{d_i}\left(  \xi_{d_i}\right)  =  \tfrac{\frac
{1}{\underline{M}_{d_{i}}}+\frac{1}{\overline{M}_{d_{i}}}}{\left(  1+\frac
{\xi_{d_{i}}}{\underline{M}_{d_{i}}}\right)  \left(  1-\frac{\xi_{d_{i}}%
}{\overline{M}_{d_{i}}}\right)  }\text{, }{i=1,\dots,N}  \text{.}
\label{eqn:r_d}%
\end{equation}
Now, assume there exists a set $I\subseteq\{1,\ldots,N\}$ such that
$\lim_{t\rightarrow\tau_{\max}}\xi_{d_{k}}(t)$\ = $\overline{M}_{d_{k}}$ (or
$-\underline{M}_{d_{k}}$), $\forall k\in I$. Hence, invoking
(\ref{eqn:epsilon_d}) and (\ref{eqn:r_d}), we conclude that $\lim
_{t\rightarrow\tau_{\max}}\varepsilon_{d_{k}}\left(  \xi_{d_{k}}(t)\right)
=+\infty$ (or $-\infty$) and $\lim_{t\rightarrow\tau_{\max}}r_{d_{k}}\left(
\xi_{d_{k}}(t)\right)  =+\infty$, $\forall k\in I$. Moreover, we also deduce from
(\ref{eqn:v_des}) that $\lim_{t\rightarrow\tau_{\max}}v_{k}\left(  \xi_{d_{k}%
},t\right)  $ remains bounded for all $k\in\bar{I}$, where $\bar{I}$ is the
complementary set of $I$. To proceed, let us define $\bar{k}$ $=\min\{I\}$ and
notice that $\varepsilon_{d_{\bar{k}}}\left(  \xi_{d_{\bar{k}}}\right)  $, as
derived from (\ref{eqn:epsilon_d}), is well defined for all $t\in\left[
0,\tau_{\max}\right)  $, owing to (\ref{eqn:satisfied_for_tmax}). Therefore,
consider the positive definite and radially unbounded function $V_{d_{\bar{k}%
}}=\frac{1}{2}\varepsilon_{d_{\bar{k}}}^{2}$ for which it is clear that
$\lim_{t\rightarrow\tau_{\max}}V_{d_{\bar{k}}}\left(  t\right)  =+\infty$.
However, differentiating $V_{d_{\bar{k}}}$with respect to time and
substituting (\ref{eqn:error_dynamics}), we obtain:%
\begin{align}
\dot{V}_{d_{\bar{k}}}  &  =\varepsilon_{d_{\bar{k}}}r_{d_{\bar{k}}}\left(
\xi_{d_{\bar{k}}}\right)  \rho_{d_{\bar{k}}}^{-1}(-k_{d_{\bar{k}}%
}\varepsilon_{\bar{k}}\cos\beta_{\bar{k}}\nonumber\\
&  \left.  \text{ \ \ }+v_{\bar{k}-1}\cos(\gamma_{\bar{k}}+\beta_{\bar{k}%
})-\dot{\rho}_{d_{\bar{k}}}\xi_{d_{\bar{k}}}\right)  \text{.} \label{eqn:dVdk}%
\end{align}
from which, owing to the fact that $v_{\bar{k}-1}\cos(\gamma_{\bar{k}}%
+\beta_{\bar{k}})-\dot{\rho}_{d_{\bar{k}}}\xi_{d_{\bar{k}}}$ is bounded and
$\cos\left(  \beta_{\bar{k}}\right)  >\cos\left(  \beta_{\mathsf{con}}\right)>0
$, we conclude that $\lim_{t\rightarrow\tau_{\max}}\dot{V}_{d_{\bar{k}}%
}(t)=-\infty$, which clearly contradicts to our supposition that $\lim_{t\rightarrow
\tau_{\max}}V_{d_{\bar{k}}}(t)=+\infty$. Thus, we conclude that $\bar{k}$ doesn't
exist and hence that $I$ is an empty set. Therefore, there exist
$\underline{\xi}_{d_{i}}$ and $\bar{\xi}_{d_{i}}$ such that:
\begin{equation}
-\underline{M}_{d_{i}}<\underline{\xi}_{d_{i}}\leq\xi_{d_{i}}(t)\leq\bar{\xi
}_{d_{i}}<\overline{M}_{d_{i}}{,\ }\forall t\in\left[  0,\tau_{\max}\right)
\label{eqn:ksi_d_final}%
\end{equation}
for all {$i=1,\dots,N$,} from which it can be easily deduced that
$\varepsilon_{d}\left(  \xi_{d}\right)  $ and consequently the control input
(\ref{eqn:v_des}) remain bounded for all $t\in\left[  0,\tau_{\max}\right)  $.

Notice also from{ (\ref{eqn:satisfied_for_tmax}) that }$\varepsilon_{\beta
}\left(  \xi_{\beta}\right)  $, as derived from (\ref{eqn:epsilon_beta}), is
well defined for all $t\in\left[  0,\tau_{\max}\right)  $. Therefore, consider
the positive definite and radially unbounded function $V_{\beta}=\frac{1}%
{2}\varepsilon_{\beta}^{T}K_{\beta}^{-1}\varepsilon_{\beta}$. Differentiating
$V_{\beta}$ with respect to time and substituting (\ref{eqn:ksi_beta_dot_2}),
we obtain:%
\begin{align*}
\dot{V}_{\beta} &  =-\left\Vert \varepsilon_{\beta}^{T}r_{\beta}\left(
\xi_{\beta}\right)  (\rho_{\beta}\left(  t\right)  )^{-1}\right\Vert
^{2}+\varepsilon_{\beta}^{T}r_{\beta}\left(  \xi_{\beta}\right)  (\rho_{\beta
}\left(  t\right)  )^{-1}K_{\beta}^{-1}\\
&  \left.  \text{ \ \ \ \ \ \ \ }(D^{-1}\tilde{S}K_{d}\varepsilon_{d}\left(  \xi_{d}\right)
+D^{-1}s-\dot{\rho}_{\beta}\left(  t\right)  \xi_{\beta})\right.  \text{.}%
\end{align*}
Hence, exploiting the boundedness of $D^{-1}$, $\tilde{S}$, $s$ and
$\varepsilon_{d}\left(  \xi_{d}\right)  $, we get:%
\begin{align}
\dot{V}_{\beta}\leq & -\left\Vert \varepsilon_{\beta}^{T}r_{\beta}\left(
\xi_{\beta}\right)  (\rho_{\beta}\left(  t\right)  )^{-1}\right\Vert
^{2}\nonumber\\
&+\left\Vert \varepsilon_{\beta}^{T}r_{\beta}\left(  \xi_{\beta}\right)
(\rho_{\beta}\left(  t\right)  )^{-1}\right\Vert K_{\beta}^{-1}\bar{B}_{\beta
}\label{eqn:dVb}%
\end{align}
where $\bar{B}_{\beta}$ is a positive constant independent of $\tau_{\max}$,
satisfying:%
\begin{equation}
\left\Vert D^{-1}(\tilde{S}K_{d}\varepsilon_{d}\left(  \xi_{d}\right) +s -D\dot{\rho}_{\beta}\left(
t\right)  \xi_{\beta})\right\Vert {\leq\bar{B}}_{\beta}\label{eqn:B_beta}%
\end{equation}
for all $\xi(t)\in\Omega_{\xi}$. Therefore, we conclude that $\dot{V}_{\beta}$
is negative when $\left\Vert \varepsilon_{\beta}^{T}r_{\beta}\left(
\xi_{\beta}\right)  (\rho_{\beta}\left(  t\right)  )^{-1}\right\Vert
>K_{\beta}^{-1}\bar{B}_{\beta}$, from which, owing to the positive
definiteness and diagonality of $r_{\beta}\left(  \xi_{\beta}\right)
(\rho_{\beta}\left(  t\right)  )^{-1}$ and $K_{\beta}^{-1}$ as well as
employing (\ref{eqn:PP_rho}) and (\ref{eqn: r_beta}), it can be easily
verified that:%
\[
\left\Vert \varepsilon_{\beta}(t)\right\Vert \leq\bar{\varepsilon}_{\beta
}:=\max\left\{  \left\Vert \varepsilon_{\beta}(0)\right\Vert ,K_{\beta}%
^{-1}{\bar{B}_{\beta}\max}\left\{  {\tfrac{\underline{M}_{\beta_{i}}%
\overline{M}_{\beta_{i}}}{\underline{M}_{\beta_{i}}+\overline{M}_{\beta_{i}}}%
}\right\}  \right\}
\]
for all $t\in\left[  0,\tau_{\max}\right)  $. Furthermore, invoking the
inverse logarithm in (\ref{eqn:epsilon_beta}), we obtain:%
\begin{equation}%
\begin{array}
[c]{l}%
-\underline{M}_{\beta_{i}}<\tfrac{e^{-\bar{\varepsilon}_{\beta}}-1}%
{e^{-\bar{\varepsilon}_{\beta}}+1}\underline{M}_{\beta_{i}}=\text{
\ \ \ \ \ \ \ \ \ \ \ \ \ \ \ \ \ \ \ \ \ \ \ \ \ \ \ \ \ \ \ \ \ }\\
\multicolumn{1}{c}{\underline{\xi}_{\beta_{i}}\leq\xi_{\beta_{i}}(t)\leq
\bar{\xi}_{\beta_{i}}}\\
\multicolumn{1}{r}{=\tfrac{e^{\bar{\varepsilon}_{\beta}}-1}{e^{\bar
{\varepsilon}_{\beta}}+1}\overline{M}_{\beta_{i}}<\overline{M}_{\beta_{i}}}%
\end{array}
\label{eqn:ksi_beta_final}%
\end{equation}
{for all $t\in\left[  0,\tau_{\max}\right)  $ and $i=1,\dots,N$. Thus, the
control input $\omega\left(  \xi_{\beta},t\right)  $ designed in
(\ref{eqn:wmega_des}) remains bounded for all $t\in\left[  0,\tau_{\max
}\right)  $. }

Up to this point, what remains to be shown is that $\tau_{\max}$ can be
extended to $\infty$. In this direction, notice by (\ref{eqn:ksi_d_final}) and
(\ref{eqn:ksi_beta_final}) that $\xi(t)\in\Omega_{\xi}^{\prime}=\Omega
_{\xi_{d}}^{\prime}\times\Omega_{\xi_{\beta}}^{\prime}${, {$\forall
t\in\left[  0,\tau_{\max}\right)  $, where }%
\begin{align*}
\Omega_{\xi_{d}}^{\prime}  &  =[\underline{\xi}_{d_{1}},\bar{\xi}_{d_{1}%
}]\times\ldots\times\lbrack\underline{\xi}_{d_{N}},\bar{\xi}_{d_{N}}]\\
\Omega_{\xi_{\beta}}^{\prime}  &  =[\underline{\xi}_{\beta_{1}},\bar{\xi
}_{\beta_{1}}]\times\ldots\times\lbrack\underline{\xi}_{\beta_{N}},\bar{\xi
}_{\beta_{N}}]
\end{align*}
are nonempty and compact subsets of $\Omega_{\xi_{d}}$ and $\Omega_{\xi
_{\beta}}$\ respectively. Hence, assuming that $\tau_{\max}<\infty$ and since
$\Omega_{\xi}^{\prime}\subset$ $\Omega_{\xi}$, Proposition C.3.6 in \cite{1998Sontag} (p.p. 481)
dictates the existence of a time instant $t^{\prime}\in\left[  0,\tau_{\max
}\right)  $ such that $\xi(t^{\prime})\notin\Omega_{\xi}^{\prime}$, which is a
clear contradiction. Therefore, $\tau_{\max}=\infty$ and $\xi(t)\in\Omega
_{\xi}^{\prime}\subset$ $\Omega_{\xi}$, {$\forall t\geq0$. }Finally,
multiplying (\ref{eqn:ksi_d_final}) and (\ref{eqn:ksi_beta_final}) by
$\rho_{d_{i}}\left(  t\right)  $ and $\rho_{\beta_{i}}\left(  t\right)  $
respectively, we conclude:}%
\begin{equation}
\left.
\begin{array}
[c]{c}%
-\underline{M}_{d_{i}}\rho_{d_{i}}\left(  t\right)  <e_{d_{i}}\left(
t\right)  <\overline{M}_{d_{i}}\rho_{d_{i}}\left(  t\right) \\
-\underline{M}_{\beta_{i}}\rho_{\beta_{i}}\left(  t\right)  <e_{\beta_{i}%
}\left(  t\right)  <\overline{M}_{\beta_{i}}\rho_{\beta_{i}}\left(  t\right)
\end{array}
\right\}  \text{, }\forall t\geq0 \label{eqn:e_final}%
\end{equation}
{for all $i=1,\dots,N$ and consequently the solution of the 2-D formation control
problem with prescribed performance under collision and connectivity
constraints for the considered platoon of vehicles. }\begin{figure}[ptb]
\begin{center}
\includegraphics[width=3.35in]{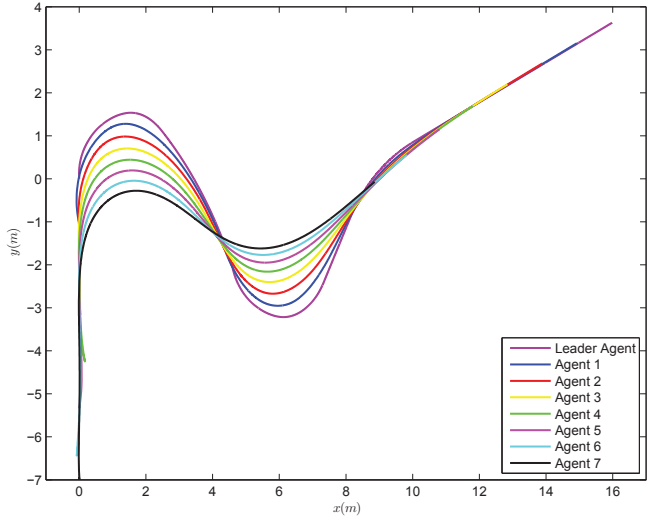}
\end{center}
\caption{The trajectories on a planar surface of the vehicles composing the
platoon.}%
\label{Fig:Trajectories}%
\end{figure}
\end{proof}

\begin{remark}
From the aforementioned proof it can be deduced that the proposed control
scheme achieves its goals without resorting to the need of rendering the
transformed errors $\varepsilon_{d}\left(  \xi_{d}\right)  $, $\varepsilon
_{\beta}\left(  \xi_{\beta}\right)  $ arbitrarily small by adopting extreme
values of the control gains $K_{d}$, $K_{\beta}$ (see (\ref{eqn:dVdk}) and
(\ref{eqn:dVb})). The actual performance given in (\ref{eqn:e_final}) is
solely determined by the designer-specified functions $\rho_{d_{i}}\left(
t\right)  ,\rho_{\beta_{i}}\left(  t\right)  $ and parameters $\underline
{M}_{d_{i}},\overline{M}_{d_{i}},\underline{M}_{\beta_{i}},\overline{M}%
_{\beta_{i}}$, that are related to the collision and connectivity constraints.
Furthermore, the selection of the control gains $K_{d}$, $K_{\beta}$ is
significantly simplified to adopting those values that lead to reasonable
control effort. Nonetheless, is should be noted that their selection affects
the control input characteristics (i.e., decreasing the gain values leads to
increased oscillatory behavior within the prescribed performance envelope described by (\ref{eqn:PP_inequalities}), which
is improved when adopting, higher values, enlarging, however, the control
effort both in magnitude and rate). Additionally, fine tuning might be needed
in real-time scenarios, to retain the required linear and angular velocities
within the range that can be implemented by the motors. Similarly, control input constraints impose an upper bound on the required
speed of convergence of $\rho_{d_{i}}\left(  t\right)$, $\rho_{\beta_{i}%
}\left(  t\right)$ that is affected by the exponentials $e^{-l_{d}t}$, $e^{-l_{\beta}t}$.
\end{remark}

\section{Simulation Results \label{sec:simulation_results}}

To demonstrate the efficiency of the proposed decentralized
control protocol, a realistic simulation was carried out in the
WEBOTS$^{\text{TM}}$ platform \cite{Webots}, considering a platoon
comprising of a Pioneer3AT/leader and $7$ Pioneer3DX following vehicles. The inter-vehicular distance and the bearing angle are obtained by a camera with range $D=2$m and angle of view $AoV=90^{o}$, that is mounted on each
Pioneer3DX vehicle and detects a white spherical marker attached on its predecessor. The leading vehicle performs a smooth maneuver depicted in
Fig. \ref{Fig:Trajectories}, along with the trajectories of the following
vehicles. The desired distance between successive vehicles is
set equally at $d_{i,des}=d=0.75$m$,${ \ $i=1,\dots,7$, whereas the collision
and connectivity constraints are given b}y $d_{\mathsf{col}}=0.05d=0.0375$m
and $d_{\mathsf{con}}=D=2$m. Regarding the heading error, we select
$\beta_{\mathsf{con}}=\frac{AoV}{2}=45^{o}$. In addition, we require steady
state error of no more than $0.0625$m and minimum speed of convergence as
obtained by the exponential $e^{-0.5t}$ for the distance error. Thus, invoking
(\ref{eqn:Md_Mbeta}), we select the parameters $\underline{M}_{d_{i}%
}=0.7125$m$,\overline{M}_{d_{i}}=1.25$m and the functions $\rho_{d_{i}%
}\left(  t\right)  =(1-\frac{0.0625}{1.25})e^{-0.5t}+\frac{0.0625}{1.25},${
$i=1,\dots,7$. In the same vein, we require maximum steady state error of
}$1.15^{o}$ and minimum speed of convergence as obtained by the exponential
$e^{-0.5t}$ for the heading error. Therefore, $\underline{M}_{\beta_{i}%
}=\overline{M}_{\beta_{i}}=\beta_{\mathsf{con}}=45^{o}$ and $\rho_{\beta_{i}%
}\left(  t\right)  =(1-\frac{1.15}{45})e^{-0.5t}+\frac{1.15}{45},${
$i=1,\dots,7$. Finally, we chose $K_{d}=\mathrm{diag}%
[0.005_{,}\dots,0.005]$ and $K_{\beta}=\mathrm{diag}[0.001_{,}\dots,0.001]$ to
produce reasonable linear and angular velocities that can be implemented by the motors of the mobile robots.}

The simulation results are illustrated in Figs. \ref{Fig:Distance_errors}%
-\ref{Fig:Distances}. More specifically, the evolution of the distance and
heading errors $e_{d_{i}}\left(  t\right)  $, $e_{\beta_{i}}\left(  t\right)$, $i=1,\dots,7$ is depicted in Figs. \ref{Fig:Distance_errors} and
\ref{Fig:Orientation_errors} respectively, along with the corresponding
performance bounds. The inter-vehicular distance along with the collision and
connectivity constraints are pictured in Fig. \ref{Fig:Distances}. As it was
predicted by the theoretical analysis, the decentralized 2-D control problem of vehicular platoons under limited visual feedback is solved
with guaranteed transient and steady state response, collision
avoidance and connectivity maintenance. Finally, the accompanying video demonstrates the aforementioned simulation study in the WEBOTS$^{\text{TM}}${
platform.

\begin{figure}[ptb]
\begin{center}
\includegraphics[width=3.35in]{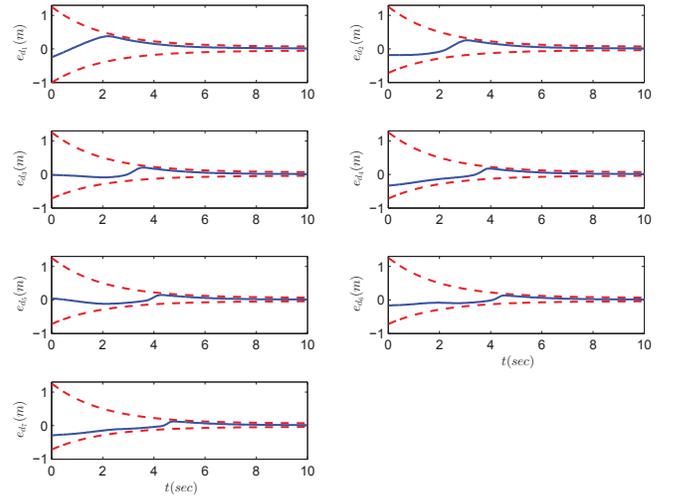}
\end{center}
\caption{The evolution of the distance errors $e_{d_{i}}(t),i=1,...,7$ (blue
lines), along with the imposed performance bounds (red lines).}%
\label{Fig:Distance_errors}%
\end{figure}

\begin{figure}[ptb]
\begin{center}
\includegraphics[width=3.35in]{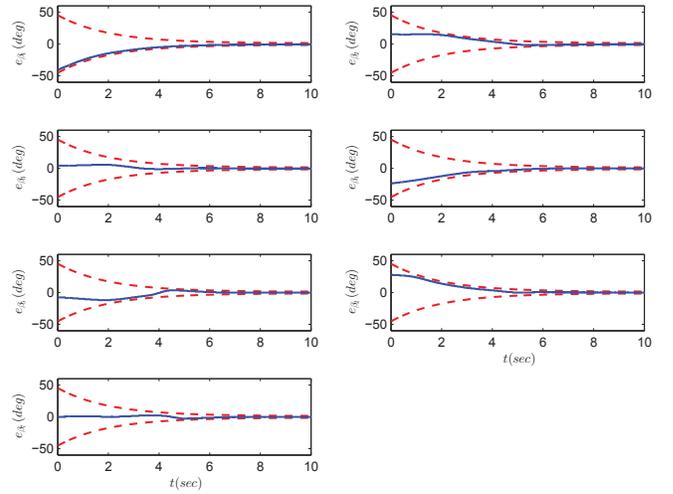}
\end{center}
\caption{The evolution of the heading errors $e_{\beta_{i}%
}(t),i=1,...,7$ (blue lines), along with the imposed performance bounds (red
lines).}%
\label{Fig:Orientation_errors}%
\end{figure}

\begin{figure}[ptb]
\begin{center}
\includegraphics[width=3.35in]{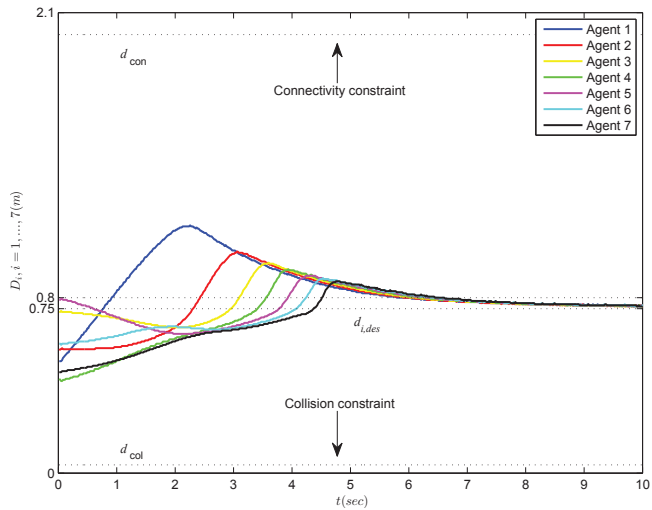}
\end{center}
\caption{The distance between successive vehicles along with the collision and
connectivity constraints.}%
\label{Fig:Distances}%
\end{figure}

\section{Conclusions \label{sec:conclusions}}

We proposed a $2$-D decentralized control protocol for vehicular
platoons under the predecessor-following architecture, that establishes
arbitrarily fast and maintains with arbitrary accuracy a desired formation
without: i) any inter-vehicular collisions and ii) violating the connectivity
constraints imposed by the limited field of view of the onboard cameras that
are used for visual feedback. Future research efforts will be devoted towards:
i) addressing the bidirectional architecture in a similar framework (i.e.,
prescribed performance as well as collision and connectivity
constraints), ii) guaranteeing obstacle avoidance and iii) extending the
control protocol to apply for uncertain nonlinear vehicle dynamics. Finally, real-time
experiments will be conducted to verify the theoretical findings.

\bibliographystyle{IEEEtran}
\bibliography{2D_Platoon}

\end{document}